\documentclass[sigplan,twocolumn]{acmart}

\usepackage{algorithm}
\usepackage{algpseudocode}
\usepackage{tikz}
\usetikzlibrary{arrows.meta, positioning, shapes, calc}
\usepackage[normalem]{ulem}
\usepackage{cancel}

\AtBeginDocument{%
  }

\setcopyright{acmlicensed}
\copyrightyear{2026}
\acmYear{2026}
\acmDOI{XXXXXXX.XXXXXXX}
\acmConference[PaPoC '26]{Make sure to enter the correct
  conference title from your rights confirmation email}{April 27--30,
  2026}{Edinburgh, UK}
\acmISBN{978-1-4503-XXXX-X/2018/06}

\begin{document}

\title{Semantic Conflict Model for Collaborative Data Structures}

\author{Georgii Semenov}
\email{georgii.v.semenov@gmail.com}
\orcid{0000-0003-4725-7666}
\affiliation{%
  \institution{ITMO University}
  \city{Saint Petersburg}
  \country{Russia}
}

\author{Vitaly Aksenov}
\email{aksenov.vitaly@gmail.com}
\orcid{0000-0001-9134-5490}
\affiliation{%
  \institution{ITMO University}
  \city{Saint Petersburg}
  \country{Russia}
}


\newcommand{\fix}{\color{red}}

\begin{abstract}

Digital collaboration systems support asynchronous work over replicated data,
where conflicts arise when concurrent operations cannot be unambiguously integrated
into a shared history. While Conflict-Free Replicated Data Types (CRDTs) ensure
convergence through built-in conflict resolution, this resolution is typically
implicit and opaque to users, whereas existing reconciliation techniques often
rely on centralized coordination.
This paper introduces a conflict model for collaborative data structures that enables explicit,
local-first conflict resolution without central coordination. The model identifies
conflicts using semantic dependencies between operations and resolves them by rebasing
conflicting operations onto a reconciling operation via a three-way merge over a replicated journal.
We demonstrate our approach on collaborative registers, including an explicit formulation
of the Last-Writer-Wins Register and a multi-register entity supporting semi-automatic reconciliation.



\end{abstract}

\begin{CCSXML}
<ccs2012>
   <concept>
       <concept_id>10003120.10003130.10003131.10003570</concept_id>
       <concept_desc>Human-centered computing~Computer supported cooperative work</concept_desc>
       <concept_significance>300</concept_significance>
       </concept>
   <concept>
       <concept_id>10010147.10010919.10010172</concept_id>
       <concept_desc>Computing methodologies~Distributed algorithms</concept_desc>
       <concept_significance>500</concept_significance>
       </concept>
   <concept>
       <concept_id>10002951.10003152.10003166.10003172</concept_id>
       <concept_desc>Information systems~Remote replication</concept_desc>
       <concept_significance>500</concept_significance>
       </concept>
 </ccs2012>
\end{CCSXML}

\ccsdesc[300]{Human-centered computing~Computer supported cooperative work}
\ccsdesc[500]{Computing methodologies~Distributed algorithms}
\ccsdesc[500]{Information systems~Remote replication}

\keywords{semantic conflicts, optimistic concurrency control,
eventual consistency, collaborative editing, CRDT}

\received{xx February 2026}
\received[revised]{xx March 2026}
\received[accepted]{xx April 2026}

\maketitle

\tikzset{
  obj/.style={circle, draw, inner sep=1pt},
  base/.style={regular polygon, regular polygon sides=3, draw, inner sep=2pt},
  entails/.style={draw, -{Stealth}, thick},
  discards/.style={draw, -{Stealth}, thick},
  conflict/.style={draw, thick},
  cross/.style={node contents={$\,\times\,$}}
}

\section{Introduction}

Digital collaboration environments provide a shared workspace that enables multiple users,
teams, or organizations to work together toward a common goal \cite{Schmidt1992}.
Implementing highly available groupware systems, however, is challenging due to the need to 
tolerate network partitions and enable isolated work, as is common in version concurrency control systems.
To address these challenges, users are allowed to collaborate asynchronously by producing
{\itshape optimistic} transactions that are applied
locally to their data replicas, and eventually propagated to other replicas \cite{SaitoShapiro2005}.
This approach has been revisited as {\itshape eventual state-machine replication} \cite{Kuznetsov2025}.

The core part of such systems is the synchronization algorithm, which
ensures convergence to a single state once all contributions have been integrated.
During synchronization, {\itshape conflicts} arise when concurrent operations, i.e., that
are not causally related, interfere with one another. This may be captured semantically, for example, 
as operations violating each other's intents \cite{EllisGibbs1989, Sun1998} 
or as operations being non-commutative \cite{Weihl1988}.
The common emergence of conflicts, however, is in the ambiguity of how operations should be integrated into a shared history.
Crucially, the multi-dimensionality of this ambiguity elevates conflict resolution to a distinct collaborative subtask, concerned with coordinating and
reconciling parallel commitments, a process referred to as {\itshape articulation work} \cite{Schmidt1992}.

One widely adopted approach to conflict resolution is exemplified by
conflict-free replicated data types (CRDTs) \cite{Shapiro2011},
which rely on semilattice structures or on the commutativity of operations.
CRDTs provide strong eventual consistency through resolution strategies that are embedded directly into the data
type definition, as captured in Add-Wins Sets and Last-Writer-Wins Registers \cite{Shapiro2011}.
Multi-Value Registers \cite{Shapiro2011} take a different stance by retaining all concurrently written values and thereby deferring their resolution.
Nevertheless, conflict resolution in CRDTs generally remains implicit, opaque to users, cannot be supervised,
and is generally non-native to application-specific semantics.

A second, more expressive class of approaches is based on reconciliation techniques \cite{ShapiroRowstronKermarrec2000}, 
which treat conflict resolution as an application-level
procedure that integrates concurrent transactions in a semi-automatic manner.
Typical strategies include computing an execution order that minimizes violated constraints
\cite{PreguicaShapiroMatheson2003} or maximizes the number of preserved operations \cite{Semenov2007}.
These techniques usually rely on a centralized reconciliation phase with global knowledge of
all operations, which makes them poorly suited to local-first \cite{Kleppmann2019:LocalFirst} collaboration
scenarios, where replicas must continue to make progress independently on local copies.
This limitation motivates the need for reconciliation middleware that operates without a central coordinator,
provides explicit convergence guarantees, and treats conflict resolution as a first-class collaborative activity,
that is, as a form of articulation work.

This paper presents a novel conflict model for collaborative data structures
that enables local-first conflict resolution through the reconciliation of operations.
Conflicts are identified using {\itshape entails} and {\itshape discards} dependencies,
which relate each operation to its logical premises, and
conflicting operations are rebased onto a new reconciling operation.
We illustrate the model through several examples of collaborative registers and their formal definitions,
including the Last-Writer-Wins Register as an instance of a CRDT with automated conflict resolution,
and a calendar event as an example of a multi-register entity supporting complex operations.

\section{System Model}

We consider a shared memory $M$ which is modeled as a set
of $n$ pre-initialized registers. Each register $M_i$ has its basis of {\itshape actions} $A_i$,
which is formally a set of all allowed atomic operations on the register. We define an {\itshape operation} $o = [a_1, a_2, \ldots, a_k]$
as a contiguous sequence of action instances $a_i \in \bigcup^n A_i$ on the registers with the transactional semantics,
i.e., executing an operation $o$ is equivalent to a sequential uninterruptible execution of all its actions $a_1 \circ a_2 \circ \ldots \circ a_k$.
The {\itshape state} $s_i \in S_i$ of register $M_i$ is deterministically produced by an interpreter $\mathcal{I}_i$ after executing
a sequence of actions on the initial empty context $\perp$.
Equality of registers is inferred from the equality of their states, i.e.,
$x = y \Leftrightarrow s_x = s_y$.

Our distributed system model consists of multiple nodes with 
their shared memory replicas $\mathcal{R} = \{r_1, r_2, \ldots, r_n\}$.
Operations issued on the replicas are {\itshape applied} locally and eventually {\itshape synchronized}
between nodes, which are connected with an unreliable network.
Local replica $r$ of the shared memory is modeled
by a {\itshape local history} $\mathcal{H}_{r}$, which is a sequence of operations, and each register $x = M_i$ has a projection $\mathcal{H}_{r}(x)$
of the history, which is a sequence of actions corresponding to $x$.
One can provide the notion of {\itshape happens-before} \cite{Lamport1978} relation $\prec$, defined over actions and operations in the history, so that if $o_1 \prec o_2$, then
$o_1$ precedes $o_2$ in any history $\mathcal{H}_{r}$ (same definition for $a_1 \prec a_2$ and $\mathcal{H}_{r}(x)$). 

Each memory register $x = M_i$ is initialized with a {\itshape constructor} operation $c_x$, so that $\forall a \in \mathcal{H}_{r}(x) \colon a \not\in c_x \Rightarrow c_x \prec a$.
When an operation is {\itshape applied} by a replica, it is just appended to the local history $\mathcal{H}_{r}$.
Eventually, the current replica history $\mathcal{H}_{r}$ is {\itshape published}, so that it is available for pull on the other replicas.
When ready to reconcile with foreign changes, replica $r$ {\itshape synchronizes-with} another replica $r'$ by integrating operations from another replica's 
published history $\mathcal{H}_{r'}$ into the local history, i.e., $\mathcal{H}_{r} \gets \mathcal{H}_{r} \sqcup \mathcal{H}_{r'}.$
Complete synchronization is achieved through integrating the histories of all replicas in an arbitrary order, until
they converge to the same {\itshape state} once all the conflicts are resolved.

\section{Conflict Model}

A conflict model provides mechanisms for {\itshape identification}
and {\itshape resolution} of conflicts in collaborative data structures. The strongest
conflict model is the {\itshape concurrent} conflict model, i.e.,
operations are conflict-prone if they are concurrent in terms of happens-before relation \cite{Lamport1978}:
$o_1 \parallel o_2 \Leftrightarrow  o_1 \not\prec o_2 \land o_2 \not\prec o_1$.
The weaker one is the {\itshape non-commutative} conflict model, i.e., operations conflict if they
are concurrent and non-commutative \cite{Shapiro2011}.
Usually, conflict resolution is captured by establishing the happens-before order between conflicting operations,
so that they can be applied in a consistent manner on all replicas.
In this section, we introduce a conflict model based on operation dependencies to empower
the synchronization procedure in the outlined system model.

\subsection{Conflict Identification}

Let us consider two kinds of relations over operations in the history:
{\itshape entails} ($\vdash$) and {\itshape is discarded by} ($\ll$).
Intuitively, entailment captures {\itshape epistemic preconditions} \cite{DynamicEpistemicLogic2008} of an operation: issuing a new operation relies on the presence
of effects of its premises, under the assumption that their effects remain valid. Discarding,
in turn, captures {\itshape epistemic revision} \cite{Baltag2008}: an operation may invalidate the effect of another
operation, thereby operation is no longer a premise for subsequent operations.

Syntactically, upon being applied, $o$ is assigned an immutable set of {\itshape premises}
$\Gamma = \{ o_1, o_2, \ldots, o_n \}$ that {\itshape entail} $o$, which is denoted as $\Gamma \vdash o$.
The premises of an operation $o = [a_1, \ldots, a_k]$ are derived from the premises of its actions,
i.e., $\Gamma(o) = \bigcup_{j \in [1, k]} \Gamma(a_j)$, and $a_1 \vdash a_2$ denotes that
the corresponding operations $a_1 \in o_1$ and $a_2 \in o_2$ satisfy $o_1 \vdash o_2$.
Common premises of several operations are denoted as $\Gamma(o_1, o_2, \cdots, o_n) = \bigcap_{i = 1}^n \Gamma(o_i)$.
Entailment implies happens-before relation, i.e., $o_1 \vdash o_2 \Rightarrow o_1 \prec o_2$.
Let $\vdash^{*}$ define the reflexive transitive closure of $\vdash$.
Then, concurrent operations may be redefined in terms of entailment as follows:
$o_1 \parallel o_2 \Leftrightarrow (o_1 \not\vdash^{*} o_2) \ \land \ (o_2 \not\vdash^{*} o_1)$.

We require that each action $a$ is associated with a monotonically non-increasing predicate $vis(a, \mathcal{H}_r(x))$,
indicating whether the effect of $a$ is visible in the current history $\mathcal{H}_{r}(x)$.
State is modeled as a set of actions with their visible effects.
Next, we allow an action in the history to be {\itshape rolled back} by removing it together with all actions that transitively depend on it,
i.e., $\mathcal{H}_{r}(x) \setminus {a} \equiv [a' \in \mathcal{H}_{r}(x) \colon a \centernot\vdash^{*} a']$.
Then, we define {\itshape is discarded by} relation as follows:
$a_1 \ll a_2 \Leftrightarrow a_1 \vdash a_2 \ \land \ vis(a_1, \mathcal{H}_r(x) \setminus \{a_2\}) \land \neg vis(a_1, \mathcal{H}_r(x))$.
Similarly, $o_1 \ll o_2 \Leftrightarrow \exists a_1 \in o_1, \ a_2 \in o_2 \colon a_1 \ll a_2$, and
$vis(o) = \bigwedge_{a \in o} vis(a)$.

A {\itshape conflict} arises when a premise of one operation is concurrently
discarded by another operation.
Formally, for a set of operations $\{ o_1, o_2, \ldots, o_n \}$, we define the set of {\itshape conflicting premises} as
$\hat{\Gamma}(o_1, o_2, \cdots, o_n) = \{ o' \in \Gamma(o_1, o_2, \cdots, o_n) \mid \ \exists o_i \neq o_j \colon o' \vdash o_i \land o' \ll o_j \} $.
Two operations $o_1$ and $o_2$ are said to be {\itshape compatible} if they do not have conflicting premises, i.e.
$o_1 \triangleleft o_2 \Leftrightarrow \hat{\Gamma}(o_1, o_2) = \varnothing$.
Similarly, a local history $\mathcal{H}_r$
is compatible with an operation $o$ if all operations in the history are compatible with $o$:
$\mathcal{H}_r \triangleleft o \Leftrightarrow \forall o' \in \mathcal{H}_r \colon o' \triangleleft o$.
A history $\mathcal{H}_{r}$ is {\itshape valid} if it is a topological ordering of the received operations
with respect to $\vdash$, and all pairs of operations in the history are mutually compatible:
$\forall o_i, o_j \in \mathcal{H}_{r} \colon o_i \triangleleft o_j$.

To enable representation of operations with their premises, we use an {\itshape entailment graph}, where
nodes are the operations of the local history $\mathcal{H}_r$ and directed edges represent the partial order $\vdash$.
An edge between $o_1$ and $o_2$ exists if and only $o_1 \vdash o_2$.
Then, topological sorting of the entailment graph {\itshape induces} a valid history $\mathcal{H}_r$ \cite{Kuznetsov2025}.

\subsection{Conflict Resolution}

To support conflict resolution, we allow an operation to be {\itshape rebased} with respect to its premises.
Then, intuitively, when a conflict occurs, conflicting operations may be rebased to a {\itshape merge} operation
that reconciles them.

Let us denote by  $\hat{o} \ \hat{\vdash} \ o$ that an operation $o$ is rebased to $\hat{o}$.
Formally, when $\hat{o} \ \hat{\vdash} \ o$, we interpret $o$ as an empty sequence of actions and
include its premises to the premises of $\hat{o}$,
i.e., $\Gamma(o) \subseteq \Gamma(\hat{o})$.
Additionally, we close the entailment relation transitively under rebasing:
$(o_1 \ \vdash \ o') \land (o' \ \hat{\vdash} \ o_2) \Rightarrow (o_1 \vdash o_2)$.
The incoming edges of $o$ may be removed from the entailment graph, as its premises $\Gamma(o)$ are transitively included in $\hat{o}$.
That is, rebasing induces causality:  $o_1 \ \hat{\vdash} \ o_2 \Rightarrow o_1 \prec o_2$.

To support cancellation of operations, we define a distinguished {\itshape tombstone} operation $o_\varnothing$.
An operation rebased to $o_\varnothing$ is treated as semantically null:
it does not contribute effects and need not be merged into replica histories.
Such rebased operations act as discontinuity markers and may be omitted from
subsequent conflict detection.

Now, we are ready to present the synchronization algorithm empowered by the rebase procedure: 
a replica $r$ incrementally integrates operations received from another replica $r'$ into its local history $\mathcal{H}_r$.

The procedure {\scshape sync}$(r')$, shown in Figure~\ref{algorithm:sync},
fetches the recent history $\mathcal{H}_{r'}$ from $r'$ and processes its operations
in causal order. For each operation $o \in \mathcal{H}_{r'}$, the algorithm proceeds as follows.
First, if $o$ has been rebased at $r'$, that is, there exists $\hat{o} \in \mathcal{H}_{r'}$
such that $\hat{o} \ \hat{\vdash} \ o$, the operation $o$ is is rolled back from the local history $\mathcal{H}_r$,
i.e., $o$ and all its transitive premises are removed.
If $o$ has been rebased to the tombstone operation $o_\varnothing$, it is treated as cancelled
and skipped. If an operation is rebased in both replicas to different operations, all the rebases are applied
(the corresponding merge operations $\{ \hat{o}_i \}$ may be stored in a Grow-only Set \cite{Shapiro2011}).
If $o$ is not already present in $\mathcal{H}_r$ and is compatible with the current history,
i.e., \ $\mathcal{H}_r \triangleleft o$, then $o$ is appended to $\mathcal{H}_r$ using
{\bfseries\scshape apply}, as the premises of $o$ are already in $\mathcal{H}_r$ due to the causal order of $\mathcal{H}_{r'}$.
Otherwise, if $o$ is not present in $\mathcal{H}_r$ and conflicts with it,
i.e.\ $\mathcal{H}_r \centernot\triangleleft o$, synchronization is suspended and
conflict resolution is triggered via {\bfseries\scshape resolve} procedure.

The resolution procedure {\bfseries\scshape resolve}, described below,
takes as input a local history $\mathcal{H}_{r}$ and a conflicting remote operation $o_0 \in \mathcal{H}_{r'}$ such that
$\mathcal{H}_{r} \centernot\triangleleft o_0$:

\begin{enumerate}
  \item Find all common premises $\Gamma = \bigcup_{i \in [0, k]} \Gamma(o_i)$,
  where $o_1, \ldots, o_k \in \mathcal{H}_r$ and $\forall i \in [1, k] \colon o_i \centernot\triangleleft o_0$.
  \item Define the set of operations for conflict resolution $O$ as operations
  entailed by $\Gamma$, i.e.,
  $O = \{x \mid \exists \bar{o} \in \Gamma \colon \bar{o} \vdash x \}$.
  \item Optionally, if a user participates in the conflict resolution and authorized to cancel operations,
  unneeded operations with their respective ancestors can be excluded before merging $O$.
  Therefore, $O$ is partitioned into a set of operations to preserve $\mathcal{D}_{\hat{o}}$
  and operations to cancel $\mathcal{D}_{\varnothing}$. By default, $\mathcal{D}_{\hat{o}} = O$, $\mathcal{D}_{\varnothing} = \varnothing$. 
  \item Create merge operation $\hat{o}$ with premises $\Gamma(\mathcal{D}_{\hat{o}})$ and a sequence of actions
  returned by an external {\scshape reconcile} procedure applied to the whole $\mathcal{D}_{\hat{o}}$.
  In general, {\scshape reconcile} is an interactive procedure where user manually defines desired order of actions in $\hat{o}$
  and, optionally, adds new actions or removes existing ones, as long as no new premise $p$,
  such that $p \centernot{\vdash^*} \Gamma(o_i)$, extends $\Gamma(\hat{o})$.
  \item Rebase operations in $\mathcal{D}_{\hat{o}}$ to $\hat{o}$ and operations in $\mathcal{D}_{\varnothing}$ to $o_\varnothing$.
\end{enumerate}

\begin{figure}
  \centering

  \begin{algorithmic}

  \State {\bfseries\scshape apply}$(o)$:
  \State $\mathcal{H}_{r} \gets \mathcal{H}_{r} \sqcup \{o\}$
  \\

  \State {\bfseries\scshape sync}$(r')$:
  \State $\mathcal{H}_{r'} \gets$ {\scshape fetch} $r'$ \Comment{fetch recent history from $r'$}
  \For{each $o \in \mathcal{H}_{r'}$} \Comment{in causal order}

    \If{$\exists \ \hat{o} \in \mathcal{H}_{r'} \colon \ \hat{o} \ \hat{\vdash} \ o$}
      \State $\mathcal{H}_{r} \gets \textsc{rollback}(o, \mathcal{H}_{r})$ \Comment{apply recent rebasing} 
      \If{$o_{\varnothing} \ \hat{\vdash}^{*} \ o$} 
        \State {\bfseries continue} \Comment{skip cancelled operation}
      \EndIf
    \EndIf

    \If{$o \not\in \mathcal{H}_{r} \land \mathcal{H}_{r} \triangleleft o$}
      \State $\textsc{apply}(o)$
    \ElsIf{$o \not\in \mathcal{H}_{r} \land \mathcal{H}_{r} \centernot\triangleleft o$}
      \State $\mathcal{H}_{r} \gets \textsc{resolve}(o, \mathcal{H}_{r})$ \Comment{resolve conflict}
    \EndIf
  \EndFor

  \end{algorithmic}

  \caption{Synchronization algorithm to integrate foreign history $\mathcal{H}_{r'}$ into $\mathcal{H}_{r}$.}
  \Description{Synchronization algorithm to integrate foreign history $\mathcal{H}_{r'}$ into $\mathcal{H}_{r}$.}
  \label{algorithm:sync}
\end{figure}

\begin{figure}
  \centering
  
    \begin{tikzpicture}[node distance=1.0cm]
      \node[obj] (o') {$o'$};

      \node[obj, below left=of o'] (o1) {$o_1$};
      \node[obj, below right=of o'] (o2) {$o_2$};

      \node[base, below=0.5cm of o1] (b1) {};
      \node[base, below=0.5cm of o2] (b2) {};
      \draw[entails] (o1) -- node[above, xshift=-8pt, yshift=-3pt] {$\vdash^*$} (b1);
      \draw[entails] (o2) -- node[above, xshift=8pt, yshift=-3pt] {$\vdash^*$} (b2);

      \draw[entails] (o') -- node[above, xshift=-8pt] {$\vdash$} (o1);
      \draw[discards] (o') -- node[above, xshift=8pt] {$\ll$} (o2);

      \node at (3, -1) {$\Rightarrow$};

      \node[obj] (o'2) at (5.5, 0) {$o'$};

      \node[obj, below left=of o'2] (o1b) {\cancel{$o_1$}};
      \node[obj, below right=of o'2] (o2b) {\cancel{$o_2$}};

      \node[base, below=0.5cm of o1b] (b1b) {};
      \node[base, below=0.5cm of o2b] (b2b) {};

      \draw[entails] (o1b) -- node[above, xshift=-8pt, yshift=-3pt] {$\vdash^*$} (b1b);
      \draw[entails] (o2b) -- node[above, xshift=8pt, yshift=-3pt] {$\vdash^*$} (b2b);

      \node[obj, below=0.35cm of o'2] (om) {$\hat{o}$};
      \draw[entails, dashed] (om) -- node[above, ] {$\hat{\vdash}$} (o1b);
      \draw[entails, dashed] (om) -- node[above,] {$\hat{\vdash}$} (o2b);
      \draw[entails] (o'2) -- node[right] {$\ll$} (om);

  \end{tikzpicture}
  \caption{Conflict resolution of $o_1$ and $o_2$ with one conflicting premise $o'$.}
  \Description{Two operations $o_1$ and $o_2$ have a common premise $o'$, which entails $o_1$ and is discarded by $o_2$.
  After the conflict is resolved, both $o_1$ and $o_2$ are rebased to a merge operation $\hat{o}$, which discards premise $o'$.}
  \label{fig:conflict-resolution-1}
\end{figure}
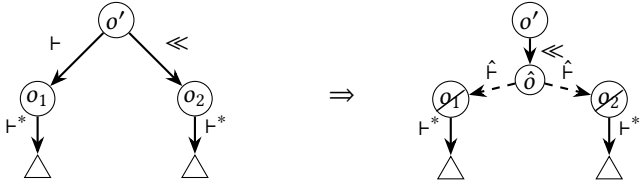

\begin{figure}
  \centering
    \begin{tikzpicture}[node distance=1.0cm]

      \node[obj] (o') {$o'$};
      \node[obj, right=of o'] (o'') {$o''$};

      \node[obj, below=of o'] (o1) {$o_1$};
      \node[obj, below=of o''] (o2) {$o_2$};
      \node[obj, right=of o2] (o3) {$o_3$};

      \node[base, below=0.5cm of o1] (b1) {};
      \node[base, below=0.5cm of o2] (b2) {};
      \node[base, below=0.5cm of o3] (b3) {};
      \draw[entails] (o1) -- node[above, xshift=-8pt, yshift=-3pt] {$\vdash^{*}$} (b1);
      \draw[entails] (o2) -- node[above, xshift=8pt, yshift=-3pt] {$\vdash^{*}$} (b2);
      \draw[entails] (o3) -- node[above, xshift=8pt, yshift=-3pt] {$\vdash^{*}$} (b3);

      \draw[entails] (o') -- node[above, xshift=-8pt] {$\vdash$} (o1);
      \draw[entails] (o'') -- node[above, xshift=8pt] {$\vdash$} (o2);
      \draw[entails] (o'') -- node[above, xshift=8pt] {$\vdash$} (o3);

      \draw[discards] (o'') -- node[above, xshift=-6pt, yshift=8pt] {$\ll$} (o1);
      \draw[discards] (o') -- node[above, xshift=+6pt, yshift=8pt] {$\ll$} (o2);

      \node at (3.68, -1) {$\Rightarrow$};

      \node[obj] (om) at (5.5, -0.95) {$\hat{o}$};
      \node[obj, above left=of om] (o'2) {$o'$};
      \node[obj, above right=of om] (o''2) {$o''$};

      \node[obj, below left=of om] (o1b) {$\cancel{o_1}$};
      \node[obj, below left=of om, xshift=1.0cm] (o2b) {$\cancel{o_2}$};
      \node[obj, below right=of om] (o3b) {$\cancel{o_3}$};

      \node[base, below=0.5cm of o1b] (b1b) {};
      \node[base, below=0.5cm of o2b] (b2b) {};
      \node[base, below=0.5cm of o3b] (b3b) {};   

      \draw[entails] (o1b) -- node[above, xshift=-8pt, yshift=-3pt] {$\vdash^*$} (b1b);
      \draw[entails] (o2b) -- node[above, xshift=-8pt, yshift=-3pt] {$\vdash^*$} (b2b);
      \draw[entails] (o3b) -- node[above, xshift=-8pt, yshift=-3pt] {$\vdash^*$} (b3b);

      \draw[entails, dashed] (om) -- node[above left,] {$\hat{\vdash}$} (o1b);
      \draw[entails, dashed] (om) -- node[above, xshift=-4pt, yshift=-6pt] {$\hat{\vdash}$} (o2b);
      \draw[entails, dashed] (om) -- node[above right,] {$\hat{\vdash}$} (o3b);
      \draw[entails] (o'2) -- node[right] {$\ll$} (om);
      \draw[entails] (o''2) -- node[left] {$\vdash$} (om);

  \end{tikzpicture}
  \caption{Conflict resolution of $o_2$ against $o_1$ and $o_3$ with two conflicting premises $o'$ and $o''$, and the merge operation $\hat{o}$ discarding premise $o'$.}
  \Description{Two operations $o_1$ and $o_2$ have common premises $o'$ and $o''$, 
  where $o'$ entails $o_1$ and is discarded by $o_2$, while $o''$ entails both $o_2$ and $o_3$. 
  $o_1$ is discarded by $o''$. After the conflict is resolved, $o_1$, $o_2$, and $o_3$ are rebased to a merge operation $\hat{o}$, which discards premise $o'$.}
  \label{fig:conflict-resolution-2}
\end{figure}
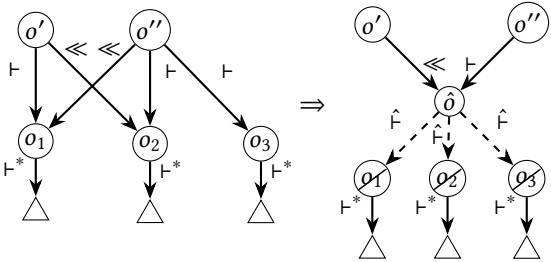

Simple cases with one and two conflicting premises
are illustrated in Figures \ref{fig:conflict-resolution-1} and \ref{fig:conflict-resolution-2} respectively.
For simplicity, common premises may be 
depicted simply as $\Gamma$. When several operations $o_1, o_2$ discard same premises $\Gamma$,
it means that $\exists o', o'' \in \Gamma \mid o' \ll o_1 \ \land \ o'' \ll o_2$.

Conflicts may be resolved concurrently, i.e., it is possible for a single operation $o$ to be rebased to different
merge operations $\hat{o}_1$ and $\hat{o}_2$ on different replicas, as illustrated in Figure \ref{fig:conflict-resolution-concurrent}.
It is an expected situation, which may lead to necessity of further conflict resolution as an inevitable part of an articulation work.

\begin{figure}
  \centering
    \begin{tikzpicture}[node distance=1.0cm]

        \node[obj] (gamma) {$\Gamma$};

        \node[obj, below left=of gamma] (o1) {$o_1$};
        \node[obj, below right=of gamma] (o2) {$o_2$};

        \node[base, below=0.5cm of o1] (b1) {};
        \node[base, below=0.5cm of o2] (b2) {};
        \draw[entails] (o1) -- node[above, xshift=-8pt, yshift=-3pt] {$\vdash^{*}$} (b1);
        \draw[entails] (o2) -- node[above, xshift=8pt, yshift=-3pt] {$\vdash^{*}$} (b2);

        \draw[discards] (gamma) -- node[left, yshift=4pt] {$\ll$} (o1);
        \draw[discards] (gamma) -- node[right, yshift=4pt] {$\ll$} (o2);

        \node at (2.0, -1) {$\Rightarrow$};  
        \node[obj] (om1) at (2.75, -0.95) {$\hat{o}_1$};
        \node[obj] (om2) at (5.25, -0.95) {$\hat{o}_2$};

        \node[obj] (o'2) at (4.0, 0.0) {$\Gamma$};

        \node[obj, below=of o'2, yshift=-0.75cm, xshift=-0.5cm] (o1b) {$\cancel{o_1}$};
        \node[obj, below=of o'2, yshift=-0.75cm, xshift=0.5cm] (o2b) {$\cancel{o_2}$};

        \node[base, below=0.5cm of o1b] (b1b) {};
        \node[base, below=0.5cm of o2b] (b2b) {};  

        \draw[entails] (o1b) -- node[above, xshift=-8pt, yshift=-6pt] {$\vdash^*$} (b1b);
        \draw[entails] (o2b) -- node[above, xshift=8pt, yshift=-6pt] {$\vdash^*$} (b2b);

        \draw[entails, dashed] (om1) -- node[above left, yshift=-8pt] {$\hat{\vdash}$} (o1b);
        \draw[entails, dashed] (om2) -- node[above right, yshift=-8pt] {$\hat{\vdash}$} (o2b);
        \draw[entails, dashed] (om2) -- node[above left, yshift=0pt] {$\hat{\vdash}$} (o1b);
        \draw[entails, dashed] (om1) -- node[above right, yshift=0pt] {$\hat{\vdash}$} (o2b);

        \draw[entails] (o'2) -- node[right] {$\ll$} (om1);
        \draw[entails] (o'2) -- node[left] {$\ll$} (om2);

  \end{tikzpicture}
  \caption{Conflict resolution of $o_1$ and $o_2$ with two conflicting premises $o'$ and $o''$ and
  a conflict resolved concurrently leading to another conflict between $\hat{o}_1$ and $\hat{o}_2$.}
  \Description{Two operations $o_1$ and $o_2$ have common premises $\Gamma$,
  where $\Gamma$ is discarded by both $o_1$ and $o_2$. After the conflict is resolved concurrently on two replicas,
  $o_1$ and $o_2$ are rebased to $\hat{o}_1$ and  $\hat{o}_2$, both discarding premise $\Gamma$ once again.}

  \label{fig:conflict-resolution-concurrent}
\end{figure}
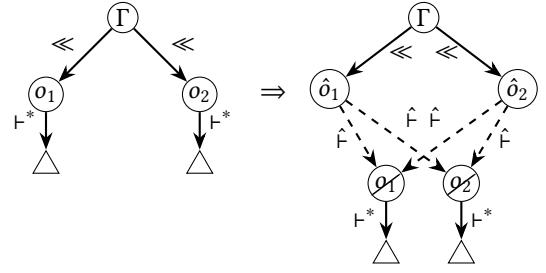

\subsection{Register Definition}

Let us discuss requirements on actions and their visibility predicates to define a register suitable for our conflict model.

\begin{definition}
  A register $x$ with an action basis $A$ is {\itshape discard-complete} if for every pair of its action instances $a_1, a_2$ in any history $\mathcal{H}_r(x)$
  of $x$ the following holds:
  $$
  vis(a_1, \mathcal{H}_r(x) \setminus \{a_2\}) \land \neg vis(a_1, \mathcal{H}_r(x))
  \Rightarrow
  a_1 \ll a_2.
  $$
\end{definition}

Then, to define a register $x = M_i$ for our shared memory model, it is sufficient to specify:
\begin{itemize}
  \item Action basis $A_i$;
  \item Interpretation function $\mathcal{I} \colon \mathcal{H}_r \to S_i$ mapping local history to the state $s \in S_i$;
  \item Arbitrary entailment rule, defining premises $\Gamma(a)$ for each action $a \in A_i$, depending on its semantics;
  \item Visibility predicate $vis(a, \mathcal{H}_r)$, defining when the effect of an action is visible, so that
  register is discard-complete.
\end{itemize}

\subsection{Correctness}

The presented shared memory model is not strongly convergent, as we expect conflicts to may have indefinitely arisen during synchronization.
However, we may prove that if no new operations are introduced and all conflicts are resolved, the shared memory strongly converges to the same state on all replicas.
We present a proof sketch which is split into two parts: firstly, we prove that the entailment graph as a graph data type satisfying {\itshape strong eventual consistency} (SEC) 
\cite{Shapiro2011SSSS} (beyond reconciliation scenarios),
and secondly, we show that each register converges to the same state in case all conflicts are resolved (assuming conflicts have been resolved in a centralized manner).

\begin{theorem}
  A graph data type, equivalent to the entailment graph $G_{\vdash}$, which supports operations $add(\Gamma, o)$ (adding a new vertex $o$ with its incoming edges $\Gamma$) 
  and $rebase(o, \hat{o})$ (adding an edge to $o$ from $\hat{o}$ not succeeding it) satisfies SEC \cite{Shapiro2011SSSS}, i.e., all replicas that have delivered 
  the same updates have the equivalent states.
\end{theorem}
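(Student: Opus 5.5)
We would prove the statement by modelling the entailment graph as an operation-based CRDT and invoking the sufficient condition of Shapiro et al.\ \cite{Shapiro2011SSSS}. In that framework, an op-based object whose updates are delivered in causal order satisfies SEC provided all concurrent updates commute. So the plan is to fix the state as a pair of grow-only sets $(V, E)$ of vertices and edges, where each edge is tagged as either an entailment edge $\vdash$ or a rebase edge $\hat{\vdash}$. Each update is then written as a pure set union:
\begin{itemize}
  \item $add(\Gamma, o)$ maps $(V, E)$ to $(V \cup \{o\}, E \cup \{(p, o, \vdash) \mid p \in \Gamma\})$;
  \item $rebase(o, \hat{o})$ maps $(V, E)$ to $(V, E \cup \{(\hat{o}, o, \hat{\vdash})\})$.
\end{itemize}
Removal of incoming edges of a rebased vertex, and the rollback in {\scshape sync}, are treated as \emph{derived views} computed from $(V, E)$, not as destructive updates. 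Equivalence of states is then equality of $(V, E)$, and the visible graph is a deterministic function of it.

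The steps are as follows.
\begin{enumerate}
  \item \textbf{Delivery preconditions.} Argue that the preconditions of each update hold at every replica upon causal delivery.
  \begin{itemize}
    \item For $add(\Gamma, o)$, every $p \in \Gamma$ satisfies $p \vdash o \Rightarrow p \prec o$, so $p$ is delivered before $o$.
    \item For $rebase(o, \hat{o})$, both $o$ and $\hat{o}$ are already present. The merge operation is created after its rebased operations, and rebasing induces causality.
  \end{itemize}
  Operation identifiers are unique, so two adds never collide on the same vertex with different $\Gamma$. This uses the fact that $\Gamma$ is immutable once assigned.
  \item \textbf{Commutativity.} Since every update is a union into grow-only sets, any two updates commute. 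Every pair of adds, of rebases, or of an add with a rebase yields the same $(V, E)$ in either order, and the updates are idempotent. Then Theorem 2.2 of \cite{Shapiro2011SSSS} (commuting op-based CRDT, causal delivery) gives that replicas that delivered the same set of updates have equal $(V, E)$.
  \item \textbf{Acyclicity.} Show the structure remains a DAG, so that the induced topological orders and the transitive closure $\vdash^{*}$ are well defined. Entailment edges point from causally earlier to later operations. A rebase edge $\hat{o} \to o$ is only added when $\hat{o}$ does not succeed $o$, which is the ``not succeeding it'' clause of the statement. Hence each edge respects a fixed acyclic order (causality, extended by the rebase-induced causality), and no cycle can form.
  \item \textbf{Derived views.} Argue that the views used by the model are deterministic functions of $(V, E)$: the closure under rebasing, the omission of tombstoned vertices, and the dropping of incoming edges of rebased vertices. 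Equal $(V, E)$ therefore give equivalent graphs.
\end{enumerate}

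The main obstacle is reconciling the paper's description, where rebasing \emph{removes} incoming edges of $o$ and rollback removes operations, with a monotone grow-only representation. If removal were a genuine update, it would not commute with a concurrent $add$ that attaches new successors to $o$. I would handle this by showing that removal is only ever a function of the presence of a rebase edge, i.e.\ it can be implemented as a tombstone-style derived view, as in 2P-set constructions. I would also check that concurrent rebases of the same $o$ to different $\hat{o}_1, \hat{o}_2$ simply coexist, consistent with the G-Set of merges mentioned for {\scshape sync}.

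A secondary point to verify is the acyclicity condition under concurrent rebases. $\hat{o}_1$ and $\hat{o}_2$ may be concurrent, but neither precedes the operations it rebases, so the union of edges stays acyclic.
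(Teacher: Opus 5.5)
Your proof is correct for the statement as posed, but it takes a different route from the paper's. The paper does not argue commutativity directly. It simulates $G_{\vdash}$ inside the add-only monotonic DAG CRDT of Shapiro et al.: it starts from sentinel edges $(c_i, i_f)$, encodes $add(\Gamma, o)$ as $addBetween(o', o, i_f)$ for each $o' \in \Gamma$, and encodes $rebase(o, \hat{o})$ as $addBetween(o', \hat{o}, o)$ for each $o' \in \Gamma(o)$. Convergence is then inherited from that state-based CRDT. You instead take the state to be grow-only sets of vertices and tagged edges, note that every update is a union and hence commutes and is idempotent, and invoke the op-based commutativity theorem.

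Your version is more elementary and self-contained, and arguably more faithful. The paper's encoding strains the freshness precondition of $addBetween$: it calls it repeatedly with the same $o$, and with an already-existing $\hat{o}$. The paper also passes over the non-monotone parts (dropping incoming edges of rebased vertices, rollback), which you explicitly treat as derived views. The paper's semilattice framing is reused in the next theorem ($G_{\vdash}' \sqcup G_{\vdash} = G_{\vdash}$), but your $(V, E)$ with componentwise union as join supplies that equally well. Had the reduction been literal, the DAG invariant would also have come for free from $addBetween$'s requirement $a <_+ b$.

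That last point bears on your Step 3, which is not needed for SEC and, as written, does not hold. First, the ``rebase-induced causality'' order is generated by the very rebase edges whose acyclicity you want to prove, so the argument is circular. Second, the ``not succeeding'' check is evaluated only at the issuing replica. In the data type as stated, two concurrent updates can each pass it and still jointly close a cycle. One replica adds $\hat{o}$ with premise $y$ and rebases $o$ to it. Concurrently, another replica adds $x$ with premise $o$ and rebases $y$ to it. Merged, this gives $\hat{o} \to o \to x \to y \to \hat{o}$, the same phenomenon as concurrent moves in replicated trees. Either drop that step from this proof, or state acyclicity as a separate invariant that needs its own argument.
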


\begin{proof}[Proof sketch]
  In this proof, we will rely on the definition of add-only monotonic directed acyclic graph CRDT in \cite{Shapiro2011},
  which models partial order graph starting from the initial edge $(i_s, i_f)$ and provides an interface with a single method $addBetween(a, o, b)$, which
  adds a new vertex $o$ with edges $(a, o)$ and $(o, b)$. CRDTs are strongly convergent by their definition, so if we can model a data type using CRDT, it is strongly convergent as well.
  
  Let us show that the operations of $G_{\vdash}$ may be expressed via $addBetween$, so $G_{\vdash}$ is a specific case of the state-based CRDT above.
  Initial state may be expressed as a set of edges $\{ (c_i, i_f) \}$, where $c_i$ are the register constructor operations,
  and $i_f$ is a sentinel.
  $add(\Gamma, o)$ may be expressed as executing $addBetween(o', o, i_f)$ for every premise $o' \in \Gamma$, and
  $rebase(o, \hat{o})$ is equivalent to performing $addBetween(o', \hat{o}, o)$ for every premise $o' \in \Gamma(o)$.

\end{proof}

\begin{theorem}
  Let all replicas freeze their histories $\mathcal{H}_{r_i}$ and have them {\itshape published}, so
  replica $r$ {\itshape synchronizes-with} all of them, resulting in local state $s$ produced by a local history $\mathcal{H}_r$ (which is induced from the entailment graph $G_{\vdash}$). 
  After all replicas {synchronize-with} published $\mathcal{H}_r$, they obtain the same state $s$ and their histories $\mathcal{H}_{r_i}'$ are inducible from $G_{\vdash}$.
  \[
    \forall i \colon \exists s \colon \forall r \in \mathcal{R} \colon s = \mathcal{I}_i(\mathcal{H}_r(M_i)).
  \]
\end{theorem}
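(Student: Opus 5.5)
The plan is to reduce the statement to the first theorem (SEC of the entailment graph $G_{\vdash}$) plus determinism of the interpreters $\mathcal{I}_i$. After $r$ has synchronized with all frozen published histories, every conflict has been resolved, so its graph $G_{\vdash}$ has received every $add$ and $rebase$ update issued anywhere. Each $r_i$ then synchronizes with the published $\mathcal{H}_r$. First we show that this second round delivers to $r_i$ exactly the same set of $add$ and $rebase$ updates that $r$ holds. By the previous theorem, the graph at $r_i$ then equals $G_{\vdash}$.

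\textbf{Step 1: the second round triggers no resolution.} We argue that during $r_i$'s {\scshape sync}$(r)$ the {\scshape resolve} branch is never taken. Take any conflicting pair $o_j, o_k$ that could be present at $r_i$. Both were already integrated at $r$, where either they were compatible or one of them was rebased. Rebasing to $\hat{o}$ rolls back the rebased operations, which {\scshape sync} re-applies at $r_i$ through the first branch. The merge $\hat{o}$ has premises $\Gamma(\mathcal{D}_{\hat{o}})$ and introduces no new premise. Validity of $\mathcal{H}_r$ means all pairs in it are mutually compatible, so processing $\mathcal{H}_r$ in causal order reaches only operations $o$ with $\mathcal{H}_{r_i} \triangleleft o$, after the rollbacks. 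Operations rebased to $o_\varnothing$ are skipped consistently on both sides.

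\textbf{Step 2: the resulting histories are inducible from $G_{\vdash}$.} Each $\mathcal{H}_{r_i}'$ is obtained by appending operations in causal order after removing rolled-back ones. It is therefore a topological order of the same graph $G_{\vdash}$, and hence inducible from it.

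\textbf{Step 3: equal states per register.} It remains to show $\mathcal{I}_i(\mathcal{H}_r(M_i)) = \mathcal{I}_i(\mathcal{H}_{r_i}'(M_i))$ even though the two topological orders may differ. We use the fact that state is the set of actions whose effects are visible. Since registers are discard-complete, any action $a_2$ that changes the visibility of $a_1$ satisfies $a_1 \ll a_2$, and hence $a_1 \vdash a_2$. Every visibility-affecting pair is therefore ordered identically in every topological order of $G_{\vdash}$. Operations not related by $\vdash^*$ do not affect each other's visibility, so their relative order does not change which actions are visible. Since $vis$ depends only on these relations, the visible action sets coincide, and the states coincide by determinism of $\mathcal{I}_i$.

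\textbf{Main obstacle.} Step 3 is where I expect the difficulty: it needs $vis(a,\cdot)$ to be invariant under reordering concurrent actions, and this is only implicit in the model. I would first try to derive it from discard-completeness by induction on adjacent transpositions of unrelated actions in the topological order. If that fails, I would state order-independence of $vis$ for $\vdash$-unrelated actions as an explicit register assumption.
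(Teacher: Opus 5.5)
Your decomposition matches the paper's. First, convergence of $G_{\vdash}$ follows from the first theorem; your ``same delivered updates'' is equivalent to the paper's $G_{\vdash}'\sqsubseteq G_{\vdash}$ plus idempotence of $\sqcup$. Second, every topological sort of $G_{\vdash}$ gives the same visible-action set, because discard-completeness forces $a_1\vdash a_2$ whenever $a_1\ll a_2$.

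The gap is in Step 3: the paper uses absence of conflicts there, and you do not. Your inference goes from ``every visibility-affecting pair is $\vdash$-ordered'' and ``$\vdash$-unrelated operations do not affect each other's visibility'' to ``their relative order does not matter.'' That step fails, because the order of two unrelated operations can change the visibility of a \emph{third} action. Take $a\vdash b$, $a\vdash c$, $b\parallel c$, and let $a$ become invisible exactly when $b$ is applied after $c$. This $vis$ is monotonically non-increasing. In $[a,c,b]$, discard-completeness only demands $a\ll b$ and $a\ll c$, i.e.\ $a\vdash b$ and $a\vdash c$, which hold. Both pairs are ordered the same in every topological sort, yet $[a,b,c]$ and $[a,c,b]$ yield different states. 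So an adjacent-transposition induction from discard-completeness alone cannot succeed. Your fallback would add a hypothesis the theorem does not have.

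Validity is what excludes this configuration. In $[a,c,b]$ you have $a\in\Gamma(b,c)$ with $a\vdash c$ and $a\ll b$, so $a\in\hat{\Gamma}(b,c)$. The paper's last step rests on exactly this. Since $G_{\vdash}$ has no conflicts, an invisible $a_1$ has a \emph{unique} discarder $a_2$: two distinct discarding operations would make $a_1$ a conflicting premise of them. Then $a_1\vdash a_2$ fixes $a_1\prec a_2$ in every induced history. You already establish validity in Step 1, so feed it into Step 3 rather than postulating a new register axiom. The paper, admittedly, still tacitly assumes that a discarder's effect depends only on its $\prec$-position relative to what it discards, so your sense that something is implicit is justified.

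A smaller caution on Step 1, which is more explicit than the paper's purely graph-level argument. Rebasing makes $\hat{o}\prec o$, so in causal order {\scshape sync} tests $\mathcal{H}_{r_i}\triangleleft\hat{o}$ before it reaches and rolls back the rebased $o$. In the situation of Figure~\ref{fig:conflict-resolution-1}, a replica still holding $o_1$ sees $o'\vdash o_1$ and $o'\ll\hat{o}$ when it tests $\hat{o}$. Your ``after the rollbacks'' therefore needs the rollbacks to actually be performed first.
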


\begin{proof}[Proof sketch]
  We need to prove that in absence of newly issued operations and centralized reconciliation of all replicas, which took place at $r$,
  all the replicas will acquire the same state during synchronization without calling {\bfseries\scshape resolve}, i.e., no further conflict resolution happens.

  Let history $\mathcal{H}'$ of a replica $r'$ be {\itshape induced}, i.e., generated as a topological sort, from $G_{\vdash}'$ before replica $r$ {\itshape synchronized-with} $r'$. In terms of graphs, upon
  $r'$ receives $\mathcal{H}_r$, $G_{\vdash}' \sqsubseteq G_{\vdash}$, as no new operations are made on $r$ (and, consequently, no new entailment edges are added).
  Then, as we previously showed that an entailment graph may be modeled as a state-based CRDT, due to the mononocity $G_{\vdash}' \leq G_{\vdash}$ and the idempotence of least-upper-bound operator $\sqcup$
  \cite{Shapiro2011SSSS}, $G_{\vdash}' \sqcup G_{\vdash} = G_{\vdash}$.

  The rest we need to prove is that any history induced from $G_{\vdash}$ produces the same state $s$
  which is modeled as a set of visible and invisible action instances.
  Due to discard-completeness of registers and absence of conflicts in $G_{\vdash}$, if $a_1$ is invisible, then $\exists! \ a_2 \colon a_1 \ll a_2$, i.e., $a_1 \prec a_2$.
  Thus, due to the happens-before order on actions that make other actions invisible in any history, all the histories produce the same state.
\end{proof}

\section{Examples}

In this section, we illustrate the proposed model with several register data type definitions by
specifying their actions $A$ and their visibility predicates $vis(a, \mathcal{H}_r)$.

\subsection{Register}

Register is a data type that stores a single value with an action basis $A = \{ mov \}$ and a query $val$,
which is instantiated with the constructor $c_x = [mov \ 0]$.
Let the state $s \in S$ be modeled as a pair $(\nu, \alpha)$,
where $\nu$ is the current value of the register and $\alpha$ is the last action that set this value.
Then an action $a = mov \ v$ is applied to local register replica $x_r$ with $\Gamma(a) = \{ \alpha(x_r) \}$ 
and interpreted by $\mathcal{I}$ as setting the state $s$ to $(v, a)$.
Getting the value is performed as follows: $val = \lambda . \nu$.
Visibility of an action $a$ is defined as the local state $s = \mathcal{I}(\mathcal{H}_r(x)) = (\nu, \alpha)$ of the register $x$ holds the value $v$
set by this action, i.e., $vis(a, \mathcal{H}_r(x)) \Leftrightarrow \nu = v \land \alpha = a$.
Any pair of concurrent operations containing $mov$ actions corresponding to the same register instance $x$ are in conflict.

\subsection{Arithmetic Register}

Let us extend an integer register data type with arithmetic actions $A = \{ mov, add, mul \}$
with an interpreter $\mathcal{I}$ producing the state $s = (\omega, \nu, \alpha)$, where
$\omega \in \{ =, +, \cdot \}$ is the current computation mode corresponding to the actions of $A$,
$\nu$ is the current initial value, and $\alpha$ is a sequence of actions performed on this value. 
Query $val$ may then be derived as getting the aggregate of the actions in $\alpha$ over $\nu$, i.e.,
$val = \lambda. \ foldr \ \omega \ \nu \ \alpha$.
Visibility of an action $a$ then may be defined as follows: $vis(a, \mathcal{H}_r(x)) \Leftrightarrow a \in \alpha$,
and $\Gamma(a) = \{ {last}(\alpha) \}$.

Let us consider an example with a register $x$ constructed by $c_i = [a_0]$, $a_0 = add \ 0$, which is interpreted as the state
$s = (+, 0, \{a_0\})$.
A following addition action $a_1 = add \ 2$ is interpreted as appending $a_1$ to $\alpha$, i.e., $s = (+, 0, \{a_0, a_1\})$.
Similarly, a multiplication action $a_2 = mul \ 5$ is interpreted as setting
the computation mode $\cdot$ with the current $\nu = val$ and overwriting the action sequence with $\{ a_2 \}$, i.e., $s = (\cdot, 2, \{a_2\})$.
An assignment action $a_3 = mov \ 3$ modifies the state to $s = (=, 10, \{ a_3 \})$, but subsequent assignments are not accumulated in the action set.
A similar example is shown in Figure \ref{fig:arithmetic-register-conflict}.

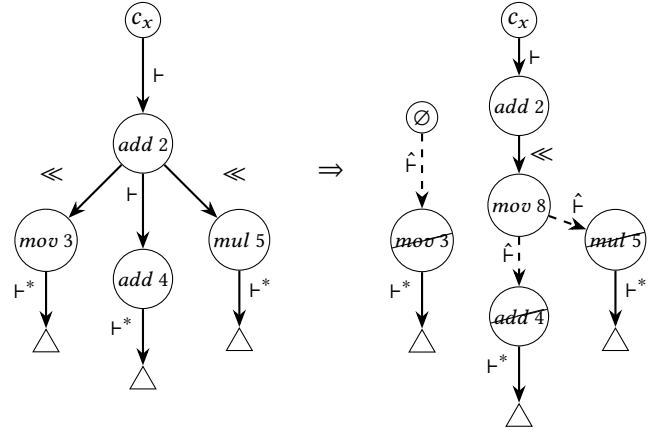
\begin{figure}
  \centering
  \begin{tikzpicture}[node distance=1.0cm]
    \node[obj] (cx) {$c_x$};
      
      \node[obj, below=of cx] (add2) {\footnotesize $add\ 2$};
      \draw[entails] (cx) -- node[right] {$\vdash$} (add2);
      
      \node[obj, below left=of add2] (mov3) {\footnotesize $mov\ 3$};
      \node[obj, below=of add2] (add4) {\footnotesize $add\ 4$};
      \node[obj, below right=of add2] (mul5) {\footnotesize $mul\ 5$};
      
      \draw[entails] (add2) -- node[above left, xshift=-8pt] {$\ll$} (mov3);
      \draw[entails] (add2) -- node[above, xshift=-4pt] {$\vdash$} (add4);
      \draw[entails] (add2) -- node[above right, xshift=8pt] {$\ll$} (mul5);
      
      \node[base, below=0.75cm of mov3] (b1) {};
      \node[base, below=0.75cm of add4] (b2) {};
      \node[base, below=0.75cm of mul5] (b3) {};
      
      \draw[entails] (mov3) -- node[above, xshift=-8pt, yshift=-3pt] {$\vdash^*$} (b1);
      \draw[entails] (add4) -- node[above, xshift=-8pt, yshift=-3pt] {$\vdash^*$} (b2);
      \draw[entails] (mul5) -- node[above, xshift=8pt, yshift=-3pt] {$\vdash^*$} (b3);

      \node at (2.5, -2) {$\Rightarrow$};

      \node[obj] (cx2) at (5.0, 0.0) {$c_x$};
      
      \node[obj, below=of cx2, yshift=0.5cm] (add2b) {\footnotesize $add\ 2$};
      \draw[entails] (cx2) -- node[right] {$\vdash$} (add2b);

        \node[obj, below left=of add2b, yshift=-0.5cm] (mov3b) {\footnotesize $\cancel{mov\ 3}$};
        \node[obj, below=of add2b, yshift=-1.0cm] (add4b) {\footnotesize $\cancel{add\ 4}$};
        \node[obj, below right=of add2b, yshift=-0.5cm] (mul5b) {\footnotesize $\cancel{mul\ 5}$};   
      
      \node[obj, below=0.5cm of add2b] (mov8) {\footnotesize $mov\ 8$};

      \node[obj, above=1.0cm of mov3b] (empty) {$\varnothing$};   \draw[entails, dashed] (empty) -- node[above left, xshift=2pt, yshift=-4pt] {$\hat{\vdash}$} (mov3b);

      \draw[entails, dashed] (mov8) -- node[above, xshift=-4pt, yshift=-4pt] {$\hat{\vdash}$} (add4b);
      \draw[entails, dashed] (mov8) -- node[above right, xshift=-2pt] {$\hat{\vdash}$} (mul5b);
      \draw[discards] (add2b) -- node[right, yshift=0pt] {$\ll$} (mov8);
      
      \node[base, below=0.75cm of mov3b] (b1b) {};
      \node[base, below=0.75cm of add4b] (b2b) {};
      \node[base, below=0.75cm of mul5b] (b3b) {};
      
      \draw[entails] (mov3b) -- node[above, xshift=-8pt, yshift=-3pt] {$\vdash^*$} (b1b);
      \draw[entails] (add4b) -- node[above, xshift=-8pt, yshift=-3pt] {$\vdash^*$} (b2b);
      \draw[entails] (mul5b) -- node[above, xshift=8pt, yshift=-3pt] {$\vdash^*$} (b3b);

  \end{tikzpicture}
  \caption{Conflict resolution on arithmetic register within additive context of $add \ 2$, which resulted in $mov \ 8$.}
  \Description{An arithmetic register is constructed with $c_x$ entailing $add \ 2$.
  Then, three concurrent operations $mov \ 3$, $add \ 4$, and $mul \ 5$ are made, where
  $mov \ 3$ and $mul \ 5$ discard $add \ 2$, while $add \ 4$ entails it.
  After the conflict is resolved, a new operation $mov \ 8$ is made, which discards $add \ 2$.
  Specifically, $mov \ 3$ is cancelled, while $add \ 4$ and $mul \ 5$ are rebased to be entailed by $mov \ 8$.}
  \label{fig:arithmetic-register-conflict}
\end{figure}

\subsection{LWW-Register}

Last-Write-Wins Register is a conflict-free replicated data type \cite{Shapiro2011}
that prioritizes the latest value assigned to the register in case of concurrent writes.
To achieve this, each write operation $mov \ x \ t$ includes a total order timestamp $t \in \mathcal{T}$,
usually derived from the wall-clock time of the issuing process.
A query $val = \{v(a) \mid t(a) = \max_{t}(\mathcal{H}_{r}(x))\}$ simply returns the values associated with the greatest timestamp
which follows the celebrated Thomas write rule \cite{Thomas1979}.

Despite LWW-Register being seen as conflict-free, still the conflict can happen, which is
known as {\itshape lost update problem}, when
a write operation effect may have never been observed by a read operation
due to the presence of a more recent concurrent write operation, i.e., 
values are silently overwritten during synchronization.
Traditionally, this nuance is ignored in favor of simplicity of the data type interface,
however, it is easy to resolve such situations in entails-discards model.

To identify lost updates, we specify visibility predicate as 
$vis(a, \mathcal{H}_{r}(x)) \Leftrightarrow \{a\} = \{ a' \mid t(a') = \max_{t}(\mathcal{H}_{r}(x)) \}$,
query $val = \{ v(a) \mid t(a) = \max_{t}(\mathcal{H}_{r}(x)) \}$,
and $\Gamma(a) = \{ a \mid t(a) = \max_{t}(\mathcal{H}_{r}(x)) \}$.
Conflicts may be resolved automatically by the replicas in place, when
the merge operation $\hat{o}$ simply replays the latest seen write operation. At the same time,
equal greatest timestamps may be still defined as a conflict, as shown in Figure \ref{fig:lww-register-conflict}, or resolved deterministically,
e.g., by comparing operation identifiers.

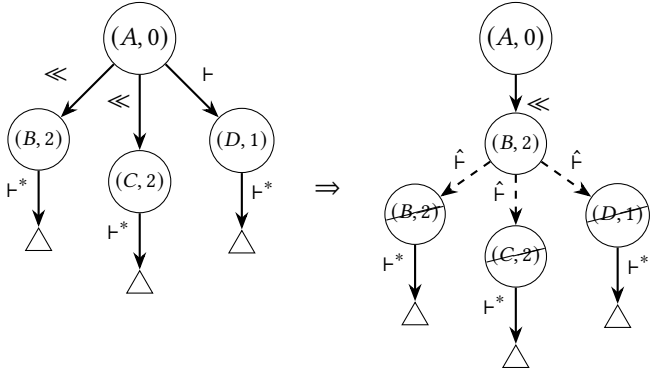
\begin{figure}
  \centering
  \begin{tikzpicture}[node distance=1.0cm]
    \node[obj] (cx) {$(A, 0)$};
    
    \node[obj, below left=of cx] (o1) {\footnotesize $(B, 2)$};
    \node[obj, below=of cx] (o3) {\footnotesize $(C, 2)$};
    \node[obj, below right=of cx] (o2) {\footnotesize $(D, 1)$};
    
    \draw[entails] (cx) -- node[above, xshift=-8pt, yshift=-2pt] {$\ll$} (o3);
    \draw[entails] (cx) -- node[above right, xshift=0pt] {$\vdash$} (o2);
    
    \draw[discards] (cx) -- node[above, xshift=-12pt] {$\ll$} (o1);
    
    \node[base, below=0.75cm of o1] (b1) {};
    \node[base, below=0.75cm of o3] (b3) {};
    \node[base, below=0.75cm of o2] (b2) {};
    
    \draw[entails] (o1) -- node[above, xshift=-8pt, yshift=-3pt] {$\vdash^*$} (b1);
    \draw[entails] (o3) -- node[above, xshift=-8pt, yshift=-3pt] {$\vdash^*$} (b3);
    \draw[entails] (o2) -- node[above, xshift=8pt, yshift=-3pt] {$\vdash^*$} (b2);

    \node at (2.5, -2) {$\Rightarrow$};

    \node[obj] (cxb) at (5.0, 0.0) {$(A, 0)$};
    
    \node[obj, below left=of cxb, yshift=-1.0cm] (o1b) {\footnotesize $\cancel{(B, 2)}$};
    \node[obj, below=of cxb, yshift=-1.0cm] (o3b) {\footnotesize $\cancel{(C, 2)}$};
    \node[obj, below right=of cxb, yshift=-1.0cm] (o2b) {\footnotesize $\cancel{(D, 1)}$};
    
    \node[obj, below=0.5cm of cxb] (fix) {\footnotesize $(B, 2)$};

    \draw[entails, dashed] (fix) -- node[above left, yshift=-4pt] {$\hat{\vdash}$} (o3b);
    \draw[entails, dashed] (fix) -- node[above right, xshift=-2pt] {$\hat{\vdash}$} (o2b);
    \draw[entails, dashed] (fix) -- node[above right, xshift=-8pt] {$\hat{\vdash}$} (o1b);
    \draw[discards] (cxb) -- node[right, yshift=-4pt] {$\ll$} (fix);
    
    \node[base, below=0.75cm of o1b] (b1b) {};
    \node[base, below=0.75cm of o3b] (b3b) {};
    \node[base, below=0.75cm of o2b] (b2b) {};
    
    \draw[entails] (o1b) -- node[above, xshift=-8pt, yshift=-3pt] {$\vdash^*$} (b1b);
    \draw[entails] (o3b) -- node[above, xshift=-8pt, yshift=-3pt] {$\vdash^*$} (b3b);
    \draw[entails] (o2b) -- node[above, xshift=8pt, yshift=-3pt] {$\vdash^*$} (b2b);

  \end{tikzpicture}
  \caption{Conflicts in LWW-Register: $(B, 2)$ and $(C, 2)$ are concurrent operations with equal timestamps; $(D, 1)$ is suppressed by its siblings.}
  \Description{TODO: add description}
  \label{fig:lww-register-conflict}
\end{figure}

\subsection{Multi-Register}

Finally, collaboration in shared memory consisting of multiple registers can be considered, which illustrates that
operations may contain actions on distinct registers.

Let us introduce a $touch$ action to allow definition of custom user-defined premises for operations
that span multiple registers. Action $touch$ simply forces recent $mov$ operation $o$ to be included in the premises
$\Gamma$ of the issuing operation.
Roughly speaking, this mechanism enables expression of custom optimistic locks on registers by 
referencing operations providing the currently observed values.

Let us consider an example with three registers corresponding to fields of a calendar event: 
$M_1$ is a string event title, $M_2$ is a pair of the event start and the end time modeled as unix timestamps, and
$M_3$ is the event location.
All the registers are initially instantiated by their respective constructors: $c_1 = mov \ '{\text{Lunch: Alice x Bob}}'$,
$c_2 = mov \ {\text{1pm-1.30pm}}$, $c_3 = mov \ '{\text{Bambi's}}'$.

Assume Alice has acquired a time slot to extend the event duration until 2pm,
but Bob changes the event location concurrently,
i.e., $o_A = [mov \ M_2 \ {\text{1pm-2pm}}]$ exists with $o_B = [mov \ M_3 \ {\text{Meadow's}}]$.
However, Alice had optimistically assumed that the event location will not be changed by others,
because, otherwise, the event duration would not have been extended in $o_A$.

These operations are compatible, as they operate on different registers,
but they can be simply adjusted, so that the intention conflict and 
optimistic assumption of Alice about the event location not changed is respected.
To achieve this, Alice's operation $o_A$ includes a $touch \ M_3$ action in $o_A$,
so that Bob's concurrent update $o_B$ will result in a conflict, as soon as Alice receives Bob's update,
as shown in Figure \ref{fig:multi-register}.

\begin{figure}
  \centering
  \begin{tikzpicture}[node distance=1.0cm]
    \node[obj] (c1) {$c_1$};
    \node[obj, right=2.5cm of c1] (c2) {$c_2$};
    \node[obj, right=2.5cm of c2] (c3) {$c_3$};
    
    \node[obj, below left=0.75cm and 0.75cm of c2] (oA) {$o_A$};
    \draw[entails] (c2) -- node[above, xshift=-8pt] {$\ll$} (oA);
    \draw[entails] (c3) -- node[above, xshift=0pt, yshift=0pt] {$\vdash$} (oA);
    
    \node[obj, below right=0.75cm and 0.75cm of c2] (oB) {$o_B$};
    \draw[entails] (c3) -- node[above, yshift=-12pt, xshift=8pt] {$\ll$} (oB);
    
    \node[base, below=0.75cm of oA] (bA) {};
    \node[base, below=0.75cm of oB] (bB) {};
    \draw[entails] (oA) -- node[above, xshift=-8pt, yshift=-3pt] {$\vdash^*$} (bA);
    \draw[entails] (oB) -- node[above, xshift=8pt, yshift=-3pt] {$\vdash^*$} (bB);
  \end{tikzpicture}
  \caption{Multi-register example: Alice issues $o_A = [touch \ M_3, \ mov \ M_2 \ {\text{1pm-2pm}}]$, Bob issues $o_B = [mov \ M_3 \ {\text{Meadow's}}]$.
  Conflict arises as the effect of $c_3$ is concurrently discarded by $o_B$.}
  \label{fig:multi-register}
\end{figure}
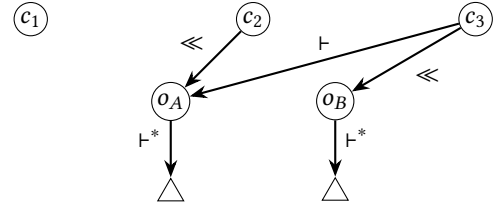

\section{Related work}

The presented conflict model is influenced by dynamic epistemic logic \cite{DynamicEpistemicLogic2008}
with a goal to revisit {\itshape intention preservation} property \cite{Sun1998} firstly captured
in consistency model for Operational Transformations \cite{EllisGibbs1989}.
As operation may be applied only when its premises are already present in the history,
the conflict model implies a weaker form of causal delivery \cite{Lamport1978}.
Support of multi-register data types accords with wait-free data structures in shared memory \cite{AspnesHerlihy1990}
and the composition of convergent and commutative replicated data types \cite{Baquero2017,Weidner2020}.

The entailment graph is a monotonic semi-lattice, i.e.,
a {\itshape state-based replicated data type} (CvRDT) \cite{Shapiro2011} and, noteworthy,
it is used to represent the {\itshape operation-based} history of a data type.
Mergeable replicated data types (MRDTs) \cite{Kaki2019} also present a framework for
defining arbitrary data types using invertible relational specifications,
where {\itshape concretization} functions are employed to resolve conflicts by ordering operations
in a three-way merge manner \cite{Mens2002}. However, conflicts are not explicitly represented neither in CRDTs nor in MRDTs,
which complicates reasoning about the data type behavior in presence of concurrent updates in collaboration.

\section{Conclusion}

In this work, we presented a conflict model for collaborative data types
based on custom defined entails-discards semantics of operations.
The next step in this research is to refine the correctness proof and 
explore the expressiveness of the model by revisiting known non-register CRDTs, to define a general procedure for
selective undo-redo operations, and implement a software framework to define custom data types 
and automatically prove their correctness properties.

\bibliographystyle{ACM-Reference-Format}
\bibliography{sample-base}

\end{document}